\documentclass[smallcondensed,final]{svjour3}
\smartqed  

\usepackage{amsmath,amssymb,amsfonts}
\usepackage{theorem}
\usepackage{graphicx}
\usepackage{times,mathptm}
\usepackage{enumerate}
\usepackage{color}
\usepackage{tikz}
\usetikzlibrary{graphs}
\usetikzlibrary{positioning}

\newcommand{\implik}{\Longrightarrow}

\newcommand{\ekviv}{\Longleftrightarrow}

\newcommand{\mnimpl}{\mbox{$
    \mathop{-\raise.8pt\hbox{\mathsurround=0pt$\!\scriptstyle\circ$}}$}}


\usepackage{bm}

\newtheorem{observation}[theorem]{Observation}

\begin{document}
\title{Dynamic logic assigned to automata}
\author{Ivan~Chajda\thanks{{Both authors acknowledge the support by a bilateral project 
New Perspectives on Residuated Posets  financed by  
Austrian Science Fund (FWF): project I 1923-N25, 
and the Czech Science Foundation (GA\v CR): project 15-34697L}. } \and Jan~Paseka}

\institute{I.~Chajda \at Department of Algebra and Geometry,
                               Faculty of Science, 
                               Palack\'y University Olomouc, 17. listopadu 12,
                               Olomouc, 771 46, Czech Republic\\
					\email{ivan.chajda@upol.cz}
\and J.~Paseka \at Department of Mathematics and Statistics,
Faculty of Science, Masaryk University, 
{Kotl\'a\v r{}sk\' a\ 2}, 611~37 Brno, Czech Republic\\
\email{paseka@math.muni.cz}}

\maketitle

\begin{abstract}
A dynamic logic ${\mathbf B}$ can be assigned to every 
automaton ${\mathcal A}$ without regard if ${\mathcal A}$ is deterministic or nondeterministic. 
This logic enables us to formulate observations on ${\mathcal A}$ in the form of composed 
propositions and, due to a transition functor $T$, it captures the dynamic behaviour of ${\mathcal A}$. 
There are formulated conditions under which the automaton ${\mathcal A}$ can be recovered by means of ${\mathbf B}$ and $T$.\end{abstract}

\keywords{dynamic logic \and automaton \and state-transition relation 
\and transition functor \and modal functor}

\subclass{03B60 \and 03D05\and  68S05}
\section{Introduction}

The aim of the paper is to assign a certain logic to a given automaton 
without regard to whether it is deterministic or nondeterministic. This logic 
has to be dynamic in the sense to capture dynamicity of working 
automaton. We consider an {\em automaton} as 
${\mathcal A}=(X,S,R)$, where $X$ is a non-empty set of {\em inputs}, 
$S$ is a non-empty set of {\em states} and 
$R\subseteq X\times S\times S$ is the set of {\em labelled  transitions}. 
In this case we say that $R$ is a {\em state-transition relation} 
and it is considered as a dynamics of ${\mathcal A}$. Hence,  
the automaton ${\mathcal A}$ can be visualized as a graph 
whose vertices are states and edges denote (possibly multiple) transitions 
$s\xrightarrow{x} t$ 
from one state $s$ to another state $t$ provided an input $x$ is coming; 
this is visualized by a label $x$ on the edge $(s,t)$. 
In particular, 
motivated by the above considerations and e.g. by the paper \cite{perinotti} 
where a denumerable set of vertices is used in studying quantum automata 
to recover  the Weyl, Dirac and Maxwell dynamics in the relativistic limit we 
have to assume that the sets $X$ and $S$ can have arbitrarily large cardinality.

Any physical system can be in some sense  considered as an automaton. 
Its states are then states of the automaton and the transitition relation is 
a transition of a physical system from a given state to an admissible one. 
It should be noted that a quantum physical system is nondeterministic since  
particles can pass through a so-called superposition, i.e., they may 
randomly select a state from the set of admissible states. 



On the other hand, we often formulate certain propositions on an 
automaton ${\mathcal A}$ and deduce conclusions about 
the behaviour of ${\mathcal A}$ in the present (i.e., a {\em description}) or 
in a (near) future (i.e., a {\em forecast}). It is apparent that for this aim 
we need a certain logic which is derived from a given automaton 
and which enables us to formulate propositions on ${\mathcal A}$  
and to deduce conclusions and consequences. Due to the 
mentioned dynamics of  ${\mathcal A}$, our logic ${\mathbf B}$ 
should contain a tool for a certain dynamics. 
This tool will be called a {\em transition functor}. This 
transition functor will assign to every proposition $p\in {\mathbf B}$ 
and input $x\in X$ another proposition $q$. 
In a certain case, this functor can be considered as a modal functor 
with one more input from $X$. The above mentioned approach 
has a sense if our logic ${\mathbf B}$ with a transition functor $T$ 
enables us to reconstruct the dynamics of a given automaton ${\mathcal A}$.  
One can compare our approach with the approach from  \cite{perinotti} 
where an automaton can be represented by an operator over a Hilbert space 
or  with the approach from \cite{yongming} or \cite{mendivil} where the role of the transition functor 
is played by a map from $S$ to $({\mathbf M}^{S})^{X}$ where 
${\mathbf M}$ is a bounded lattice of  truth-values 
or by a map from $S$ to $({[0,1]}^{S})^{X}$.

In what follows, we are going to involve  a systematic approach how 
to reach such a transition functor and the logic ${\mathbf B}$ such that 
the reconstruction of the state-transition relation $R$ is possible. 
Since the conditions of our approach are formulated 
in a pure algebraic way, we need to develop an algebraic background 
(see e.g. also in \cite{Blyth}).
It is worth noticing that the  transition functor will be 
constructed formally in a similar way as tense operators introduced 
by J.~Burgess \cite{burges} for the classical logic 
and developped by the authors for several non-classical logics, see \cite{dyn}, \cite{dem} and \cite{doa}, 
and also the monograph \cite{monochapa}.
Because we are not interested in outputs of the  automaton ${\mathcal A}$, 
we will consider  ${\mathcal A}$ as the so-called {\em acceptor} only.

It is worth noticing that certain (temporal) logics assigned to automata were already 
investigated by several authors, see e.g. the seminal papers on temporal 
logics for programs by Vardi  \cite{vardibuchi}, \cite{vardilinear}, the papers \cite{dixon,konur}
and the monograph \cite{fisher}  for additional results and references. 
However, our approach is different. Namely, our logic assigned to an automaton 
is equipped with the so-called transition operator which makes the logic 
to be dynamic.

Besides of the previous, the observer or a user  of an automaton can formulate 
propositions revealing our knowledge about  it depending on the input. The truth-values 
of these propositions depend on states and inputs 
and let us assume that these propositions can acquire only two values, 
namely either TRUE of FALSE.   For example, 
if we fix an input $x\in X$, the proposition $p/x$ can be  true if the 
automaton ${\mathcal A}$ is in the state 
$s$ but false if ${\mathcal A}$ is not in the state $s$. Hence, for each 
state $s\in S$ we can evaluate the truth-value of $p/x$, it is denoted 
by $p/x(s)$. As mentioned above, $p/x(s)\in \{0, 1\}$ where 
$0$ indicates the truth-value FALSE and $1$ indicates TRUE. 

Denote by $B$ the set of propositions about the automaton ${\mathcal A}$ 
formulated by the observer. 
We can introduce the order $\leq$ on $B$ as follows: 
$$
\text{for}\ p,q\in B, p\leq q\ \text{if and only if}\ 
p(s)\leq q(s)\ \text{for all}\ s\in S.
$$
One can immediately check that the contradiction, i.e., the proposition 
with constant truth-value $0$, is the least element and the tautology, i.e., 
the proposition with the constant truth-value $1$ is the greatest element of the 
partially ordered set $(B;\leq)$; this fact will be expressed by the notation  
${\mathbf B}=(B;\leq, 0, 1)$ for the bounded partially ordered set of propositions about 
the automaton ${\mathcal A}$. 

We summarize our description as follows:
\begin{enumerate}[{-}]
\item every automaton ${\mathcal A}$ will be identified with the 
triple $(B,X, S)$, where $B$ is the set of propositions about ${\mathcal A}$, $X$ is the set 
of possible inputs  and $S$ is the set of states on ${\mathcal A}$; 
\item we are given   a  set of labelled transitions $R\subseteq X\times S\times S$  such that, 
for an input $x\in X$, ${\mathcal A}$ can go from $s$ to $t$ provided 
$(x, s,t)\in R$; 
\item the set $B$ is partially ordered by values of propositions as shown above.
\end{enumerate}

If $s\xrightarrow{x} t_1$ and $s\xrightarrow{x} t_2$ yields $t_1=t_2$ for all $s, t_1, t_2\in S$ and $x\in X$ 
we say that ${\mathcal A}$ is a {\em deterministic automaton}. 
If ${\mathcal A}$ is not deterministic we say that it is {\em nondeterministic}.

To shed light on the previous concepts, 
let us present the following example.

\begin{example}\label{firef}\upshape  \label{expend1} 
At first, let us present a very simple automaton ${\mathcal A}$ 
describing a SkyLine Terminal Transfer Service at an airport between Terminals 1 and 2. 
The SkyLine train  is housed, repaired and maintained in  the engine shed and the only way how to get there 
is through Terminal 2.

The observer can distinguish three states as follows:
\begin{enumerate}[{-}]
\item $s_1$ means that the SkyLine train is in  Terminal 1, 
\item $s_2$ means that the  SkyLine train is in  Terminal 2, 
\item $s_3$ means that the SkyLine train is in the engine shed.
\end{enumerate} 

There are two possible actions:

\begin{enumerate}[{-}]
\item $x_1$ means that the passengers entered the   SkyLine train, 
\item $x_2$ means that the  SkyLine train has to be moved to the engine shed.
\end{enumerate} 

If the  SkyLine train is in Terminal 1 or in Terminal 2 then, after the passengers entered it, it moves to the other terminal. 
If the  SkyLine train is  in Terminal 2 then, after the request that the  SkyLine train has to be moved to the engine shed is issued, it moves 
to the engine shed. If the  SkyLine train is  in  the engine shed then, regardless of what action is requested, it stays there.

The set $R$ of labelled transitions on the set  $S=\{s_1, s_2, s_3\}$ 
of states under actions from the set $X=\{x_1, x_2\}$ is of the form 
$$
R=\{(x_1,s_1, s_2), (x_1,s_2,  s_1), (x_1,s_3,s_3), (x_2,s_2, s_3),   
(x_2,s_3, s_3)\}
$$
and it can be vizualized as follows.


\tikzset{
my loop/.style={to path={
.. controls +(80:1) and +(100:1) .. (\tikztotarget) \tikztonodes}},
my state/.style={circle,draw}}

 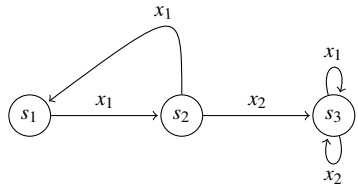
\begin{figure}[h]
\centering
\begin{tikzpicture}[shorten >=1pt,node distance=2cm,auto]
\node[my state] (s_1)  at (0,0) {$s_1$};
\node[my state] (s_2) [right of=s_1] {$s_2$};
\node[my state] (s_3) [right of=s_2] {$s_3$};
\draw [->] (s_1) -- (s_2) node[pos=.5,sloped,above] {$x_1$};
\draw [->] (s_2) -- (s_3) node[pos=.5,sloped,above] {$x_2$};
\draw[->] (s_2) .. controls +(up:1.5cm)  ..   (s_1) node[pos=.5,sloped,above]  {$x_1$};
\path (s_3) edge [loop above]  node {$x_1$}  (s_3); 
\path (s_3) edge [loop below]  node {$x_2$} (s_3);
\end{tikzpicture}
\caption{The transition graph of $R$}\label{Fig2trx}
\end{figure}

The set $B=\{0, p, q, r, p', q', r', 1\}$ of possible
propositions $B$ about the automaton ${\mathcal A}$ is as follows:

\begin{enumerate}[{-}]
\item $0$ means that the SkyLine train    is in no state of $S$, 
\item $p$ means that the SkyLine train   is in Terminal 1, 
\item $q$ means that the SkyLine train   is in Terminal 2, 
\item $r$ means that the  SkyLine train is in the engine shed, 
\item $1$ means that the  SkyLine train  is in at least one state of $S$. 
\end{enumerate}

Considering ${\mathbf B}$ as a classical logic 
(represented by a Boolean algebra $(B; \vee, \wedge, ', 0, 1)$), 
we can apply logical connectives conjunction $\wedge$, disjunction $\vee$, 
negation $'$ and implication $\implik$ to create new propositions about 
${\mathcal A}$. In our case, we can get e.g. $p'=q\vee r$ which 
means that the SkyLine train   is  either  in Terminal 2 
or  in the  engine shed, 
etc. Altogether, we obtain eight propositions. 
We may identify $\mathbf B$ with the Boolean algebra $\{0, 1\}^S$ as follows: 
\medskip
\begin{center}
\begin{tabular}{l l l l}
$0=(0,0,0)$,& $p=(1, 0, 0)$,&$q=(0, 1, 0)$, &$r=(0, 0, 1)$, \\
$p'=(0,1, 1)$,& $q'=(1, 0, 1)$,&
$r'=(1, 1, 0)$,& $1=(1,1,1)$.
\end{tabular}
\end{center}

The interpretation of propositions from $B$ is as follows: 
for any $\alpha\in B$, 
 $\alpha$ is true in the state $s_i$ of the automaton ${\mathcal A}$
if and only if  $\alpha(s_i)=1$.
\end{example}

\section{Algebraic tools}

For the above mentioned construction of 
a suitable logic with a transition functor and the 
reconverse of the given relation, we recall the 
following necessary algebraic tools and results in this section.

Let $S$ be a non-empty set. Every subset $R\subseteq S\times S$ is called 
a {\em relation on $S$} and we say 
that the couple $(S, R)$ is a {\em transition frame}.
The fact that $(s, t)\in R$ for $s, t\in S$  is expressed by the notation $s \mathrel{R} t$. 

Let $A$ be a non-empty set. A relation on $A$ is called a {\em partial order} if it 
is  reflexive, antisymmetric and transitive.  In what follows, partial order  will be 
denoted by the symbol $\leq$ and the 
 pair $\mathbf A=(A;\leq)$ will be referred to as a {\em partially ordered set} (shortly a {\em poset}).

Let $(A;\leq)$ and $(B;\leq)$ be partially ordered sets, $f, g\colon A\to B$ mappings. 
We write $f\leq g$ if $f(a)\leq g(a)$, for all $a\in A$.
A mapping $f$ is called {\em order-preserving} or 
 {\em monotone} if $a, b \in A$ and $a \leq b$ together 
imply $f(a) \leq f(b)$ 
and {\em order-reflecting}  if 
$a, b \in A$ and $f(a) \leq f(b)$ together imply $a \leq b$. 
A bijective order-preserving and order-reflecting mapping 
$f\colon A\to B$  is called an 
{\em isomorphism} and then we say that the partially ordered sets 
$(A;\leq)$ and $(B;\leq)$ are {\em isomorphic}.

Let $(A;\leq)$ and $(B;\leq)$ be partially ordered sets. A mapping $f\colon A\to B$ is 
called {\em residuated} if there exists a mapping $g\colon B\to A$ 
such that $f(a)\leq b\ \text{if and only if}\ a\leq g(b)$ 
for all $a\in A$ and $b\in B$. In this situation, we say that 
$f$  and $g$ form a {\em residuated pair} or that 
the pair $(f,g)$ is a (monotone) {\em Galois connection}. The role of Galois connections is essential for our constructions.

If a partially ordered set $\mathbf A$ has 
both a bottom and a top element, it will be called {\em bounded}; the appropriate 
notation for a bounded partially ordered set is $(A;\leq,0,1)$.  
Let $(A;\leq,0,1)$ and $(B;\leq,0,1)$ be bounded partially ordered sets. A {\em morphism} 
$f\colon A\to B$ {\em of bounded  partially ordered sets} 
is an order, 
top element and bottom element preserving map.

We can take the following useful result from \cite[Observation 1]{dyn}.

\begin{observation}[\cite{dyn}]\label{obsik} Let\/ $\mathbf A$ and\/ $\mathbf M$  be 
bounded partially ordered sets, $S$ a non-empty set, and 
$h_{s}\colon A\to M, s\in S$, morphisms of bounded partially ordered sets.
The following conditions are equivalent:
\begin{enumerate}
\item[{\rm(i)}] \(((\forall s \in S)\, h_{s}(a)\leq h_{s}(b))\implies a\leq
b\) for any elements \(a,b\in A\);
\item[{\rm(ii)}] The map $i_{{}{\mathbf A}}^{S}\colon A \to M^{S}$ defined by 
$i_{{}{\mathbf A}}^{S}(a)=(h_s(a))_{s\in S}$ for all $a\in A$ is order reflecting.
\end{enumerate}
\end{observation}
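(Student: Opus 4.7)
The plan is to prove the equivalence by directly unwinding the definition of ``order reflecting'' for the map $i_{\mathbf A}^S$ once the pointwise order on the product $M^S$ is made explicit; no substantial argument beyond definition-chasing is required.

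First, I would recall that the partial order on the product set $M^S$ is defined componentwise: for two elements $\xi=(m_s)_{s\in S}$ and $\eta=(n_s)_{s\in S}$ of $M^S$ one has $\xi\leq\eta$ if and only if $m_s\leq n_s$ for every $s\in S$. Applied to the image of $i_{\mathbf A}^S$, this means that for any $a,b\in A$ the inequality $i_{\mathbf A}^S(a)\leq i_{\mathbf A}^S(b)$ in $M^S$ is, by definition, the same as the conjunction $(\forall s\in S)\, h_s(a)\leq h_s(b)$.

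Second, I would rewrite the condition that $i_{\mathbf A}^S$ be order reflecting: by definition this says exactly that $i_{\mathbf A}^S(a)\leq i_{\mathbf A}^S(b)$ implies $a\leq b$. Substituting the componentwise reformulation from the previous paragraph, this is precisely the implication in (i). Hence (i) and (ii) say the same thing, and the equivalence is established in both directions simultaneously.

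The only points that deserve a brief check, rather than being a real obstacle, are that $i_{\mathbf A}^S$ is a well-defined map into $M^S$ and that the intended order on $M^S$ is the pointwise one. Well-definedness is immediate because each $h_s$ is a map $A\to M$; the hypothesis that the $h_s$ are morphisms of bounded posets plays no role in the equivalence itself and is only relevant for situating $i_{\mathbf A}^S$ within the category of bounded posets under the pointwise bounded structure on $M^S$. Thus the argument reduces to a straightforward translation of notation with no genuine difficulty.
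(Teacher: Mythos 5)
Your argument is correct: the statement is indeed just the observation that, with the componentwise order on $M^{S}$, the premise of the implication in (i) is literally the inequality $i_{\mathbf A}^{S}(a)\leq i_{\mathbf A}^{S}(b)$, so (i) and (ii) are the same condition by definition of ``order reflecting.'' The paper itself gives no proof (it cites the result from the reference), and the definition-chasing you carry out is the standard and essentially unique argument, including your correct side remark that the morphism hypothesis on the $h_{s}$ is not needed for the equivalence itself.
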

We then say that $\{h_{s}\colon A\to M; s\in S\}$ is a 
{\em full set of order-preserving maps with respect to} $M$. 
Note that we may in this case 
identify $\mathbf A$ with a bounded subposet   of  $\mathbf{M}^S$ since 
$i_{{}{\mathbf A}}^{S}$ is an order reflecting 
morphism alias {\em embedding} of bounded partially ordered sets. For any $s\in S$ and any 
$p=(p_t)_{t\in S}\in {M}^S$ we denote by $p(s)$ the $s$-th projection $p_s$. 
Note that $i_{{}{\mathbf A}}^{S}(a)(s)=h_s(a)$ for all $a\in A$ and all $s\in S$.

\section{Transition frames and transition operators}

The aim of this section is to recall a construction of two operators 
on partially ordered sets derived by means of a given relation 
and a construction of relations induced by these operators. 
For more details see the paper \cite{transop}.

In what follows, let  $\mathbf{M}=(M;\leq,0, 1)$ be a bounded partially ordered set  and
the bounded subposets  ${\mathbf{A}}=(A;\leq,0, 1)$ and 
${\mathbf{B}}=(B;\leq,0, 1)$  of  $\mathbf{M}^S$  will play the role 
of possibly different logics of propositions pertaining 
to our automaton ${\mathcal A}$, a corresponding set of states $S$,  and 
a state-transition relation $R$ on $S$. 
The operator $T_R\colon B\to {M}^S$ will prescribe to a proposition $b\in B$ 
about ${\mathcal A}$ a new proposition $T_R(b)\in {M}^S$ such that the 
truth value of $T_R(b)$ in state $s\in S$ is the greatest truth value that is smaller or equal 
than the corresponding truth values of $b$ in all states that can be reached from $s$. 
If there is no such state  the 
truth value of $T_R(b)$   in state $s$ will be $1$. Similarly, 
the operator $P_R\colon A\to {M}^S$ will prescribe to a proposition $a\in A$ 
about ${\mathcal A}$ a new proposition $P_R(a)\in {M}^S$ such that the 
truth value of $P_R(a)$ in state $t\in S$ is the smallest truth value that is greater or equal 
than the corresponding truth values of $b$ in all states such that $t$ can be reached from them. 
If there is no such state  the 
truth value of $P_R(a)$   in state $t$ will be $0$.

Specifically, if $M=\{ 0,1\}$  then $T_R(b)$ is true in state 
$s$ if and only if there is  no state $t\in S$  that can be reached from $s$ 
and $b$ is false in $t$, and $P_R(a)$ is false  in state $t$ 
if and only if there is  no state $s\in S$  such that $t$ can be reached from $s$ 
and $b$ is true in $s$.

Consider  a complete lattice  $\mathbf M=(M;\leq,{}0, 1)$ and  
let  $\mathbf{A}=({A};\leq, 0,1)$ and 
$\mathbf{B}=({B};\leq,$ $0,1)$  be  bounded partially ordered sets 
 with a full set $S$  of morphisms  of bounded partially ordered sets 
  into a  non-trivial  complete lattice   $\mathbf{M}$.  
We may assume that 
$\mathbf{A}$ 
and   $\mathbf{B}$ are bounded subposets of\/ $\mathbf{M}^{S}$. 
Further, let  $(S,R)$ be a transition  frame.

Define 
mappings $P_R:A\to {M}^S$ and $T_R:B\to {M}^S$
  as follows: For all $b\in B$ and all $s\in S$,  
  
\begin{equation}\begin{array}{c}\mbox{$T_R(b)(s)=\bigwedge_{M}\{b(t)\mid s R t\} $}\phantom{.} \tag{$\star$}
 \end{array}
\label{eqn:RTD}
\end{equation}
\noindent{}and, for all $a\in A$ and all $t\in S$,  

\begin{equation}
\begin{array}{c}
\mbox{${P}_R(a)(t)=\bigvee_{M}\{a(s)\mid s R  t\} $}{.} \tag{$\star\star$}
 \end{array}
\label{eqn:RPD}
\end{equation}

Then we say that ${T}_R$ ($P_R$) is an {\em upper transition functor} 
({\em lower transition functor})   {\em constructed by means of the  transition frame}  $(S,R)$, 
respectively.  We have that ${T}_R$ is an order-preserving map such that $T_R(1)=1$ and similarly, 
 ${P}_R$ is an order-preserving map such that $P_R(0)=0$.


As an illustration of our approach we present the following example.

\begin{example} \label{expend2} Consider the 
automaton ${\mathcal A}$ and the set of propositions $B$ 
of Example \ref{firef}. 
Then $R=\{x_1\}\times R_{x_1}\cup \{x_2\}\times R_{x_2}$ where 
$R_{x_1}=\{(s_1, s_2), (s_2,  s_1), (s_3,s_3)\}\ \text{and}\ 
R_{x_2}=\{(s_2, s_3),   (s_3, s_3)\}. $

Using our formulas $(\star)$ and $(\star\star)$, we can 
compute the  upper transition functors 
$T_{R_{x_1}}$, $T_{R_{x_2}}\colon B\to 2^{S}$ 
and the lower transition functors 
$P_{R_{x_1}}$, $P_{R_{x_2}}\colon B\to 2^{S}$ as follows: 

\medskip
\begin{center}
\begin{tabular}{c c}
\begin{tabular}{l l}
$T_{R_{x_1}}(0)=0$,&$T_{R_{x_1}}(1)=1$,\\
 $T_{R_{x_1}}(p)=q$,& $T_{R_{x_1}}(p')=q'$,\\
$T_{R_{x_1}}(q)=p$,& $T_{R_{x_1}}(q')=p'$,\\
$T_{R_{x_1}}(r)=r$,&$T_{R_{x_1}}(r')=r'$,\\
\end{tabular}&
\begin{tabular}{l l}
$T_{R_{x_2}}(0)=p$,&$T_{R_{x_2}}(1)=1$,\\
 $T_{R_{x_2}}(p)=p$,& $T_{R_{x_2}}(p')=1$,\\
$T_{R_{x_2}}(q)=p$,& $T_{R_{x_2}}(q')=1$,\\
$T_{R_{x_2}}(r)=1$,&$T_{R_{x_2}}(r')=p$,\\
\end{tabular}\\
\phantom{\large R}&\\
\begin{tabular}{l l}
$P_{R_{x_1}}(0)=0$,&$P_{R_{x_1}}(1)=1$,\\
 $P_{R_{x_1}}(p)=q$,& $P_{R_{x_1}}(p')=q'$,\\
$P_{R_{x_1}}(q)=p$,& $P_{R_{x_1}}(q')=p'$,\\
$P_{R_{x_1}}(r)=r$,&$P_{R_{x_1}}(r')=r'$,\\
\end{tabular}&
\begin{tabular}{l l}
$P_{R_{x_2}}(0)=0$,&$P_{R_{x_2}}(1)=r$,\\
 $P_{R_{x_2}}(p)=0$,& $P_{R_{x_2}}(p')=r$,\\
$P_{R_{x_2}}(q)=r$,& $P_{R_{x_2}}(q')=r$,\\
$P_{R_{x_2}}(r)=r$,&$P_{R_{x_2}}(r')=r$.\\
\end{tabular}
\end{tabular}
\end{center}
\medskip 


E.g., $T_{R_{x_1}}(q)=p$ means that if the Skyline train is in Terminal 1 
 then, after any possible transition under the action that 
the passengers entered  the Skyline train,  it will change to  Terminal 2, 
and  $T_{R_{x_1}}(q')=p'$ means that if the Skyline train is in Terminal 2  or 
in the  engine shed then, after any possible transition under the action that 
the passengers entered  the Skyline train,  it will be in Terminal 1 
or  in the engine shed. Similarly, $T_{R_{x_2}}(1)=1$ 
means that if  the Skyline train  is in at least one state of $S$ then, 
after any possible transition under the action that the  SkyLine train 
has to be moved to the engine shed,  it will be 
 in at least one state of $S$, and $T_{R_{x_2}}(p)=p$ means that  
 if the Skyline train is  in Terminal 1
  then, after any possible transition under the action that the  SkyLine train 
has to be moved to the engine shed (which can be done only at Terminal 2 or 
at the engine shed),  it will stay in Terminal 1. 
\end{example}

Let $P:A\to B$ and $T:B\to A$ be morphisms of  partially ordered sets, $(A;\leq)$ and $(B;\leq)$ 
subposets of \/ $\mathbf{M}^{S}$.
Let us define the relations 
\begin{equation}R_T=\{(s, t)\in S\times S\mid (\forall b\in B) (T(b)(s)\leq b(t))\} \tag{$\dagger$}
\label{eqn:RT}
\end{equation}
and 
\begin{equation}
R^{P}=\{(s, t)\in S\times S\mid (\forall a\in A) (a(s)\leq P(a)(t))\}.\tag{$\dagger\dagger$}
\label{eqn:RP}
\end{equation}

\bigskip

The relations $R_T$ and  $R^{P}$ 
on $S$  
will be called the 
{\em upper $T$-induced relation by ${\mathbf M}$} 
(shortly {\em $T$-induced relation by ${\mathbf M}$}) and 
{\em lower $P$-induced relation  by ${\mathbf M}$} 
(shortly {\em $P$-induced relation by ${\mathbf M}$}),
respectively.

\begin{example} \label{expend3} Consider the automaton ${\mathcal A}$ 
of Example \ref{expend1}.
Let  $P$ be a restriction of the operator $P_{R_{x_2}}$ 
of Example \ref{expend2} and  let 
$T$ be a restriction of the operator $T_{R_{x_2}}$ of the same example. 
Let us compute $R_T$ and 
$R^{P}$. We have $R_T=R^{P}=\{(s_2, s_3), (s_3, s_3)\}$. Hence 
the transition relation $R_{x_2}$ of Example \ref{expend2} coincides with our induced transitions 
relations $R_T$ and $R^{P}$. We can see from above that the operator $T_{R_{x_2}}$ bears the 
maximal amount of information about the transition relation $R_{x_2}$ 
on the subposet of all fixpoints of  $P_{R_{x_2}}\circ T_{R_{x_2}}$. 
The same conclusion holds for the operator $P_{R_{x_2}}$. 
\end{example}

Now, let let $(S, R)$ be a transition frame and $T_R$, $P_R$  functors 
constructed by means of the  transition frame  $(S,R)$. We can ask under what conditions 
the relation $R$ coincides with the relation $R_{T_R}$ constructed as in 
($\dagger$)  or with the relation $R^{P_R}$ constructed as in 
($\dagger\dagger$). 
If this is the case we say that $R$ {\em is recoverable from} $T_R$ or  that 
$R$ {\em is recoverable from} $P_R$. We say that 
$R$ is {\em recoverable} if it is recoverable both from $T_R$  
and  $P_R$.

\begin{example} \label{expend4} Consider the automaton ${\mathcal A}$ 
of Example \ref{expend1}.
Let us put $A=B=\{0, 1\}^S$. Let $P\colon \{0, 1\}^S\to \{0, 1\}^S$ 
and $T\colon \{0, 1\}^S\to \{0, 1\}^S$ be morphisms of  partially ordered sets given as follows:
\medskip
\begin{center}
\begin{tabular}{@{}l @{\,}l@{\,} l@{\,} l@{\,} l@{\,} l@{\,} l@{\,} l}
$T(0)=0$,& $T(p)=q$,&$T(q)=p$, &$T(r)=r$, &$T(p')=q'$,& $T(q')=p'$,&$T(r')=r'$,&$T(1)=1$,\\
$P(0)=0$,& $P(p)=q$,&$P(q)=p$, &$P(r)=r$,& $P(p')=q'$,& $P(q')=p'$,&$P(r')=r'$,& $P(1)=1$.
\end{tabular}
\end{center}
Note that $P$ coincides with the operator $P_{R_{x_1}}$ of Example \ref{expend2}, and 
$T$ coincides with the operator $T_{R_{x_1}}$ of the same example.
We have $R_T=R^{P}=\{(s_1, s_2), (s_2, s_1),(s_3, s_3)\}$. 
The transition relation $R_{x_1}$ of Example \ref{expend1} coincides with our induced transitions 
relations $R_T$ and $R^{P}$. 
\end{example}

The connection between relations induced by means of transition functors 
$T$ and $P$ is shown in the following lemma and theorem. 

 \begin{lemma}\cite{transop}\label{xreldreprest}
Let  $\mathbf{M}$ be a  non-trivial  complete lattice and  $S$ a non-empty set 
such that  $\mathbf{A}$ and $\mathbf{B}$ are bounded subposets of\/ $\mathbf{M}^{S}$. 
 Let $P:A\to {M}^{S}$ and $T:B\to {M}^{S}$ 
be morphisms of partially ordered sets such that, for all $a\in A$ and all $b\in B$, 
$$
P(a)\leq b\ \ekviv\ a\leq T(b).
$$
\begin{enumerate}[{\rm(a)}]
\item If $P(A)\subseteq B$ then $R_T\subseteq R^{P}$.
\item If $T(B)\subseteq A$ then $R^{P}\subseteq R_T$.
\item If $P(A)\subseteq B$ and $T(B)\subseteq A$ then $R_T= R^{P}$.
\end{enumerate}
\end{lemma}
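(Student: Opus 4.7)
\bigskip

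\noindent\textbf{Proof plan.} The whole statement is driven by the Galois connection $P(a)\leq b\ekviv a\leq T(b)$. My plan is first to extract the two pointwise inequalities that the adjunction yields ``for free'', and then to insert them as a sandwich between the defining inequalities of $R_T$ and $R^P$. Concretely, setting $b:=P(a)$ in the adjunction gives $a\leq T(P(a))$, and setting $a:=T(b)$ gives $P(T(b))\leq b$; evaluating these pointwise at any $s,t\in S$ produces the inequalities $a(s)\leq T(P(a))(s)$ and $P(T(b))(t)\leq b(t)$ that will serve as glue.

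For part (a), assume $P(A)\subseteq B$ and take $(s,t)\in R_T$. Given an arbitrary $a\in A$, the hypothesis lets us apply the defining clause of $R_T$ to $b:=P(a)\in B$, obtaining $T(P(a))(s)\leq P(a)(t)$. Combined with the unit inequality $a(s)\leq T(P(a))(s)$ from the previous paragraph, this gives $a(s)\leq P(a)(t)$, so $(s,t)\in R^P$. Part (b) is the mirror image: assume $T(B)\subseteq A$ and $(s,t)\in R^P$; for an arbitrary $b\in B$ take $a:=T(b)\in A$ and apply the defining clause of $R^P$ to obtain $T(b)(s)\leq P(T(b))(t)$, then use the counit inequality $P(T(b))(t)\leq b(t)$ to conclude $T(b)(s)\leq b(t)$, hence $(s,t)\in R_T$. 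Part (c) is immediate from (a) and (b).

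I do not anticipate a real obstacle here; the only subtlety is bookkeeping, namely checking that in each direction the witness element we plug in ($P(a)$ or $T(b)$) actually lies in the right set ($B$ or $A$, respectively), which is precisely what the hypotheses $P(A)\subseteq B$ and $T(B)\subseteq A$ guarantee. Without these inclusions, the defining condition of $R_T$ (resp.\ $R^P$) cannot legitimately be invoked on $P(a)$ (resp.\ $T(b)$), which explains why the two inclusions in the lemma are tied to two separate hypotheses rather than following from the adjunction alone.
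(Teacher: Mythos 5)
Your proof is correct; the paper itself gives no proof of this lemma (it is imported from the cited reference \cite{transop}), and your argument --- deriving the unit $a\leq T(P(a))$ and counit $P(T(b))\leq b$ from the adjunction and sandwiching them with the defining clauses of $R_T$ and $R^P$ --- is exactly the standard route one would expect there. You also correctly flag the only delicate point, namely that the adjunction can only be applied to $b:=P(a)$ or $a:=T(b)$ when the inclusions $P(A)\subseteq B$ resp.\ $T(B)\subseteq A$ hold, which is precisely why each inclusion of relations carries its own hypothesis.
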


Among other things, the following theorem shows that if 
a given transition relation $R$ can be recovered by the upper transition functor 
then, under natural conditions, it can be recovered by the lower 
transition functor and vice versa.
 
 \begin{theorem}\cite{transop}\label{reldreprest}
Let  $\mathbf{M}$ be a  non-trivial  complete lattice and 
$(S,R)$ a transition frame. Let $\mathbf{A}$ and $\mathbf{B}$ be  
bounded subposets of\/ $\mathbf{M}^{S}$.  Let $P_R:A\to {M}^{S}$ and $T_R:B\to {M}^{S}$ 
be functors  {constructed by means of the transition frame} $(S,R)$.
Then, for all $a\in A$ and all $b\in B$, 
$$
P_R(a)\leq b\ \ekviv\ a\leq T_R(b).
$$
Moreover,  the following holds.
\begin{enumerate}[\rm(a)] 
\item Let for all $t\in S$ exist an element $b^t\in B$ such that, for all $s\in S$, $(s,t)\notin R$, we have 
$\bigwedge_{M}\{u(b^{t})\mid  s R u\}\not\leq t(b^{t})\not =1$. Then 
$R=R_{T_R}$. 
\item Let for all $s\in S$ exist an element $a^s\in A$ such that, for all $t\in S$, $(s,t)\notin R$, we have 
$\bigvee_{M}\{u(a_{s})\mid  u R t\}\not\geq s(a^{s})\not =0$. Then 
 $R=R^{P_R}$. 
 \item If $R=R_{T_R}$ and $T_R(B)\subseteq A$ then $R=R_{T_R}=R^{P_R}$. 
 \item If $R=R^{P_R}$ and $P_R(A)\subseteq B$ then $R=R_{T_R}=R^{P_R}$.
\end{enumerate}
\end{theorem}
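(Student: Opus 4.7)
The plan is: prove the adjunction $P_R(a)\leq b\ekviv a\leq T_R(b)$ first, then record the two trivial inclusions $R\subseteq R_{T_R}$ and $R\subseteq R^{P_R}$, use the assumed witnesses $b^t$ and $a^s$ to get the reverse inclusions in parts (a) and (b), and finally combine Lemma~\ref{xreldreprest} with (a) and (b) to derive (c) and (d).

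The adjunction follows by unfolding the definitions. By definition of $P_R$ and of the pointwise order on $\mathbf{M}^{S}$, the inequality $P_R(a)\leq b$ says $\bigvee_{M}\{a(s)\mid s R t\}\leq b(t)$ for all $t\in S$; by the universal property of the join this is equivalent to the single elementwise condition
$$
(\forall s,t\in S)\ \bigl(s R t\implik a(s)\leq b(t)\bigr).
$$
Dually, $a\leq T_R(b)$ says $a(s)\leq \bigwedge_{M}\{b(t)\mid s R t\}$ for every $s\in S$, and by the universal property of the meet this reduces to exactly the same condition, so the adjunction drops out. Moreover, the inclusions $R\subseteq R_{T_R}$ and $R\subseteq R^{P_R}$ are immediate from $(\star)$ and $(\star\star)$: if $s R t$, then $b(t)$ appears in the meet defining $T_R(b)(s)$, and $a(s)$ appears in the join defining $P_R(a)(t)$.

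For part (a) the remaining inclusion $R_{T_R}\subseteq R$ is proved by contraposition. Given $(s,t)\notin R$, the hypothesis supplies $b^t\in B$ with $T_R(b^t)(s)=\bigwedge_{M}\{b^t(u)\mid s R u\}\not\leq b^t(t)$, so $(s,t)\notin R_{T_R}$. The clause ``$\not=1$'' in the hypothesis handles the degenerate case where $\{u\mid s R u\}=\emptyset$ and the empty meet equals $1$: one still needs $b^t(t)<1$ to defeat the inequality. Part (b) is entirely dual, using joins and the witnesses $a^s$, with the ``$\not=0$'' clause playing the symmetric role for the empty join.

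For (c), assume $R=R_{T_R}$ and $T_R(B)\subseteq A$. The adjunction proved above allows Lemma~\ref{xreldreprest}(b) to apply to $P_R$ and $T_R$, giving $R^{P_R}\subseteq R_{T_R}=R$; combined with the trivial $R\subseteq R^{P_R}$ this yields $R=R^{P_R}$. Part (d) is symmetric via Lemma~\ref{xreldreprest}(a). The only truly subtle point, and hence the main obstacle to a careless proof, is unpacking the ``$\not\leq\dots\not=1$'' (resp.\ ``$\not\geq\dots\not=0$'') clause in (a) (resp.\ (b)): it is precisely the statement that $b^t$ (resp.\ $a^s$) separates a non-edge $(s,t)$ from all edges incident to it, even in the no-successor (resp.\ no-predecessor) situation governed by the empty-meet/empty-join convention.
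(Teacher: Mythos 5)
Your proof is correct: the adjunction via the universal properties of meet and join, the trivial inclusions $R\subseteq R_{T_R}$ and $R\subseteq R^{P_R}$, the contrapositive use of the witnesses $b^t$, $a^s$ for (a) and (b), and the appeal to Lemma~\ref{xreldreprest} for (c) and (d) are exactly the intended argument. The paper itself states this theorem as a cited result from \cite{transop} without reproducing a proof, so there is nothing further to compare against; your reconstruction fills that gap in the standard way.
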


The following corollary of Theorem \ref{reldreprest} shows that if the 
set $B$ of propositions on the system $(B,S)$ is large enough, i.e., if it 
contains the full set  $\{0,1\}^S$  then the transition relation 
$R$ can be recovered by each of the transition functors.

 \begin{corollary}\cite{transop}\label{fcorreldreprest}
Let  $\mathbf{M}$ be a  non-trivial  complete lattice and 
$(S,R)$ a transition frame. Let 
$\mathbf{B}$ be a bounded subposet of\/ $\mathbf{M}^{S}$ 
such that $\{0,1\}^{S}\subseteq B$.  Let $P_R:B\to {M}^{S}$ and $T_R:B\to {M}^{S}$ be 
functors  {constructed by means of the transition frame} $(S,R)$.
Then $R=R^{P_R}=R_{T_R}$. 
\end{corollary}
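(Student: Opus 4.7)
The plan is to apply Theorem~\ref{reldreprest}(a) and~(b) to the functors $T_R, P_R: B \to M^S$, exhibiting for each point of $S$ an explicit witness drawn from the hypothesized subset $\{0,1\}^S \subseteq B$. The natural choices are characteristic functions: for part~(a), given $t \in S$, take $b^{t} \in \{0,1\}^{S}$ to be the characteristic function of $S \setminus \{t\}$ (value $0$ at $t$, value $1$ elsewhere); for part~(b), given $s \in S$, take $a^{s} \in \{0,1\}^S$ to be the characteristic function of $\{s\}$ (value $1$ at $s$, value $0$ elsewhere). Both lie in $B$ by the standing assumption $\{0,1\}^S \subseteq B$.

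Next, I verify the two hypotheses. For part~(a), fix $t \in S$. Then $t(b^{t}) = 0$, and since $\mathbf{M}$ is non-trivial, $0 \neq 1$. Now for any $s \in S$ with $(s,t) \notin R$, every $u$ with $s R u$ satisfies $u \neq t$, so $u(b^{t}) = 1$, and consequently $\bigwedge_{M}\{u(b^{t}) \mid s R u\} = 1$ (including the empty-meet case, where the value is the top $1$). Non-triviality of $\mathbf{M}$ then gives $1 \not\leq 0 = t(b^{t})$, as required. For part~(b), fix $s \in S$; then $s(a^{s}) = 1 \neq 0$. For any $t$ with $(s,t) \notin R$ and any $u$ with $u R t$, one has $u \neq s$, hence $u(a^{s}) = 0$, so $\bigvee_{M}\{u(a^{s}) \mid u R t\} = 0 \not\geq 1 = s(a^{s})$, as required.

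Theorem~\ref{reldreprest}(a) and~(b) then yield $R = R_{T_R}$ and $R = R^{P_R}$, completing the proof. The main obstacle is essentially conceptual rather than technical: one needs to recognize that the two-valued characteristic functions isolate individual pairs of states sharply enough to distinguish relations, and that the non-triviality hypothesis on $\mathbf{M}$ is exactly what makes the strict non-inequalities hold. No routine calculation is required beyond evaluating an empty (or constant) meet or join.
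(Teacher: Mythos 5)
Your proof is correct and follows exactly the route the paper intends: the statement is presented as a corollary of Theorem~\ref{reldreprest}, and your characteristic-function witnesses $b^{t}$ and $a^{s}$ (available in $B$ because $\{0,1\}^{S}\subseteq B$) verify the hypotheses of parts (a) and (b) precisely as required, with the non-triviality of $\mathbf{M}$ giving $1\not\leq 0$ and $0\not\geq 1$. The paper itself defers the proof to \cite{transop}, but your argument is the standard one and matches the stated derivation.
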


\section{The labelled transition functor characterizing the automaton}

The aim of this section is to derive the logic $\mathbf B$ 
with transition functors corresponding to a given automaton 
${\mathcal A}=(X,S,R)$. This logic $\mathbf B$  will be 
represented via the partially ordered set of its propositions. 
In the rest of the paper, truth-values of our logic $\mathbf B$ 
will be considered to be from the complete lattice $\mathbf M$.
Thus $\mathbf B$ will be a bounded subposet 
of ${\mathbf M}^S$ for the complete lattice ${\mathbf M}$ 
of  truth-values.

Let us consider an automaton  ${\mathcal A}=(X,S,R)$. Clearly, $R$ can be written in the following form 
$$
R=\bigcup_{x\in X}\{x\}\times R_{x}
$$
where $R_x\subseteq S\times S$ for all $x\in X$. Hence, 
for all $x\in X$, using our formulas $(\star)$ and $(\star\star)$, 
we obtain the  upper transition functor $T_{R_x}\colon B\to M^{S}$ 
and the lower transition functor $P_{R_x}\colon B\to M^{S}$. 
It follows that we have functors $T_R=(T_{R_{x}})_{x\in X}\colon B\to (M^{S})^{X}$  
and $P_R=(P_{R_{x}})_{x\in X}\colon B\to (M^{S})^{X}$. We say 
that  $T_R$ is the {\em labelled upper transition functor constructed by means of ${\mathcal A}$} 
and $P_R$ is the {\em labelled lower transition functor constructed by means of ${\mathcal A}$}. 
Note that any mapping $T\colon B\to (M^{S})^{X}$ corresponds 
uniquely to a mapping $\widetilde{T}\colon X\times B\to M^{S}$ such that, 
for all $x\in X$, $T=(\widetilde{T}(x,-))_{x\in X}$. Hence, $T_R$ and $P_R$ 
will play the role of our transition functor.

Now, let $P=(P_x)_{x\in X}:B\to ({M}^{S})^{X}$ and 
$T=(T_x)_{x\in X}:B\to ({M}^{S})^{X}$ be morphisms of partially ordered sets. 
For all $x\in X$, 
let $R^{P_x}$ be the lower $P_x$-induced relation by $\mathbf{M}$ 
and $R_{T_x}$ be the upper $T_x$-induced relation by $\mathbf{M}$. 
Then $R^{P}=\bigcup_{x\in X}\{ x\}\times R^{P_x}$ is called the 
{\em lower $P$-induced state-transition relation} and 
$R_{T}=\bigcup_{x\in X}\{ x\}\times R_{T_x}$ is called the 
{\em upper $T$-induced state-transition relation}. 
The automaton ${\mathcal A}^{P}=(X,S,R^{P})$ is said to be 
the {\em lower $P$-induced automaton} and the automaton 
${\mathcal A}_{T}=(X,S,R_{T})$ is said to be 
the {\em upper $T$-induced automaton}.

We say that the automaton ${\mathcal A}$ {\em is recoverable from} $T_R$ ($P_R$) if, 
for all $x\in X$, $R_x$ {is recoverable from} $T_{R_x}$ ($P_{R_x}$), i.e., 
if ${\mathcal A}={\mathcal A}_{T_R}$ (${\mathcal A}={\mathcal A}^{P_R}$).

The following results follow immediately from  Lemma \ref{xreldreprest}, 
Theorem \ref{reldreprest} and Corollary \ref{fcorreldreprest}.

\begin{theorem}\label{labxreldreprest}
Let  $\mathbf{M}$ be a  non-trivial  complete lattice and  $S, X$ non-empty sets 
such that  $\mathbf{B}$ is a  bounded subposet of\/ $\mathbf{M}^{S}$. 
 Let $P:B\to ({M}^{S})^{X}$ and $T:B\to ({M}^{S})^{X}$ 
be morphisms of partially ordered   sets such that, for all $a, b\in B$ and all $x\in X$, 
$$
P_{x}(a)\leq b\ \ekviv\ a\leq T_{x}(b).
$$
\begin{enumerate}[{\rm(a)}]
\item If $P(B)\subseteq B^{X}$ then $R_T\subseteq R^{P}$.
\item If $T(B)\subseteq B^{X}$ then $R^{P}\subseteq R_T$.
\item If $P(B)\subseteq B^{X}$ and 
$T(B)\subseteq B^{X}$ then $R_T= R^{P}$ 
and ${\mathcal A}_{T}={\mathcal A}^{P}$.
\end{enumerate}
\end{theorem}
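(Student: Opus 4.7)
The plan is to reduce the statement to a componentwise application of Lemma \ref{xreldreprest}, since by construction the labelled objects $R_T$, $R^P$, $P$ and $T$ are just $X$-indexed families of the ``unlabelled'' objects treated in that lemma. Concretely, the hypothesis $P_x(a)\leq b\ \ekviv\ a\leq T_x(b)$ for every $x\in X$ is precisely the Galois connection assumption of Lemma \ref{xreldreprest}, taken with $\mathbf{A}=\mathbf{B}$; so for each fixed $x$ the pair $(P_x,T_x)$ falls directly under the scope of that lemma.

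First I would fix $x\in X$ and translate the containment hypotheses on the labelled functors into the corresponding unlabelled ones. Namely, $P(B)\subseteq B^{X}$ means that for every $b\in B$ the tuple $P(b)=(P_x(b))_{x\in X}$ belongs to $B^{X}$, which is equivalent to $P_x(B)\subseteq B$ for all $x\in X$; similarly $T(B)\subseteq B^{X}$ is equivalent to $T_x(B)\subseteq B$ for all $x\in X$. Once this observation is in place, applying Lemma \ref{xreldreprest}(a) with $\mathbf{A}=\mathbf{B}$ gives $R_{T_x}\subseteq R^{P_x}$ for every $x\in X$ under the hypothesis of (a), and applying Lemma \ref{xreldreprest}(b) gives $R^{P_x}\subseteq R_{T_x}$ for every $x\in X$ under the hypothesis of (b).

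Next I would take the disjoint unions: recalling the definitions
\[
R_T=\bigcup_{x\in X}\{x\}\times R_{T_x},\qquad R^{P}=\bigcup_{x\in X}\{x\}\times R^{P_x},
\]
the componentwise inclusions obtained above yield the corresponding inclusion of labelled relations, proving (a) and (b). Part (c) follows by combining (a) and (b) to get $R_T=R^{P}$; the equality ${\mathcal A}_{T}={\mathcal A}^{P}$ is then immediate because, by definition, both automata share the same input set $X$ and state set $S$ and now share the same state-transition relation.

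The argument is essentially bookkeeping: there is no genuine obstacle, only the need to state carefully that the labelled assumption $P(B)\subseteq B^{X}$ (respectively $T(B)\subseteq B^{X}$) is logically equivalent to the collection of componentwise assumptions needed to invoke Lemma \ref{xreldreprest}. The only place one must be slightly attentive is in keeping $\mathbf{A}$ and $\mathbf{B}$ of the original lemma both equal to the single poset $\mathbf{B}$ of the present theorem, so that the domain and codomain conditions $P_x(A)\subseteq B$ and $T_x(B)\subseteq A$ of the lemma collapse to the single conditions $P_x(B)\subseteq B$ and $T_x(B)\subseteq B$ used here.
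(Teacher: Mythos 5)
Your proposal is correct and matches the paper's intent exactly: the paper gives no written proof, stating only that the theorem follows immediately from Lemma \ref{xreldreprest} applied componentwise, which is precisely the reduction you carry out (with $\mathbf{A}=\mathbf{B}$ and the labelled hypotheses unpacked fibrewise over $X$). Nothing further is needed.
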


Hence, using Theorem \ref{labxreldreprest}, we can 
ask whether the functors computed by $(\star)$ and 
$(\star\star)$ can recover a given relation $R$ on the 
set of states. The answer is in the following theorem.

\begin{theorem}\label{relxxxdreprest}
Let  $\mathbf{M}$ be a  non-trivial  complete lattice and 
 $S, X$ non-empty sets equipped with a set of labelled 
transitions $R\subseteq X\times S\times S$. Let  $\mathbf{B}$ be  
a bounded subposet of\/ $\mathbf{M}^{S}$.  Let 
$P_R\colon B\to (M^{S})^{X}$ 
and $T_R:B\to (M^{S})^{X}$ 
be labelled transition functors  {constructed by means of} $R$.
Then, for all $a, b\in B$ and all $x\in X$, 
$$
P_{R_x}(a)\leq b\ \ekviv\ a\leq T_{R_x}(b).
$$
Moreover,  the following holds.
\begin{enumerate}[\rm(a)] 
 \item If $R=R_{T_R}$ and $T_R(B)\subseteq B^{X}$ then $R=R_{T_R}=R^{P_R}$. 
 \item If $R=R^{P_R}$ and $P_R(B)\subseteq B^{X}$ then $R=R_{T_R}=R^{P_R}$.
\end{enumerate}
\end{theorem}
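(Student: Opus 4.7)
The plan is to reduce the labelled statement to its unlabelled precursor by working label-by-label. From the very definitions of the induced relations, applied on each $x$-slice, one reads off the decompositions $R_{T_R} = \bigcup_{x \in X} \{x\} \times R_{T_{R_x}}$ and $R^{P_R} = \bigcup_{x \in X} \{x\} \times R^{P_{R_x}}$, while by assumption $R = \bigcup_{x \in X} \{x\} \times R_x$. Consequently an identity such as $R = R_{T_R}$ holds if and only if $R_x = R_{T_{R_x}}$ holds for every $x \in X$, and similarly for $R^{P_R}$. This simple bookkeeping observation is what lets the single-label machinery be reused.

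For the Galois connection, I would fix $x \in X$ and apply the first assertion of Theorem~\ref{reldreprest} to the transition frame $(S, R_x)$ with $\mathbf{A} = \mathbf{B}$. Since $T_{R_x}$ and $P_{R_x}$ arise from $R_x$ via the formulas $(\star)$ and $(\star\star)$, that theorem delivers $P_{R_x}(a) \leq b \Leftrightarrow a \leq T_{R_x}(b)$ for all $a, b \in B$, as required. For part (a) I would then translate the hypotheses label-wise: $R = R_{T_R}$ yields $R_x = R_{T_{R_x}}$ for each $x$, and $T_R(B) \subseteq B^X$ unfolds to $T_{R_x}(B) \subseteq B$ for every $x$. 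These are exactly the hypotheses of Theorem~\ref{reldreprest}(c) applied, for each fixed $x$, to $(S, R_x)$ with $\mathbf{A} = \mathbf{B}$, giving $R_x = R_{T_{R_x}} = R^{P_{R_x}}$; reassembling over $x$ yields $R = R_{T_R} = R^{P_R}$. Part (b) is entirely symmetric, using Theorem~\ref{reldreprest}(d) in place of (c).

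The only point at which something could conceivably go wrong is the very first step: one has to confirm that the induced-relation constructions $(\dagger)$ and $(\dagger\dagger)$, applied to a family $(T_x)_{x \in X}$, do split as the disjoint union of the component-wise induced relations. This is however immediate, since the defining universal quantifier over $b \in B$ in $(\dagger)$ involves only the action of $T_x$ on the single $x$-slice and is untouched by the other components. Hence no genuine obstacle arises, and the theorem amounts to a label-indexed corollary of Theorem~\ref{reldreprest}.
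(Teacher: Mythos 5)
Your proposal is correct and matches the paper's intent exactly: the paper gives no separate proof, stating only that the theorem "follows immediately" from Theorem~\ref{reldreprest}, and your label-by-label decomposition of $R$, $R_{T_R}$, $R^{P_R}$ and the hypotheses $T_R(B)\subseteq B^X$, $P_R(B)\subseteq B^X$ into their $x$-slices is precisely the bookkeeping that justifies that claim. Nothing further is needed.
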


The following corollary illustrates the situation in the case 
when our partially ordered set $\mathbf{B}$ of propositions is large enough, i.e., 
the case when $\{0,1\}^{S}\subseteq B$.

 \begin{corollary}\label{labfcorreldreprest}
Let  $\mathbf{M}$ be a  non-trivial  complete lattice and 
${\mathcal A}=(X,S,R)$ an automaton. Let 
$\mathbf{B}$ be a bounded subposet of\/ $\mathbf{M}^{S}$ 
such that $\{0,1\}^{S}\subseteq B$.  
Then the automaton ${\mathcal A}$ is recoverable both from $P_R$ and ${T_R}$. 
\end{corollary}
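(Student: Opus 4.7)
The plan is to reduce the labelled statement to the unlabelled Corollary \ref{fcorreldreprest} applied slicewise in $x\in X$, and then reassemble the pieces.

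First I would unpack the definitions. Writing $R=\bigcup_{x\in X}\{x\}\times R_x$ with $R_x\subseteq S\times S$, each pair $(S,R_x)$ is an (ordinary, unlabelled) transition frame. By the construction of the labelled functors given just before the statement, the components of $T_R$ and $P_R$ are precisely $T_{R_x}$ and $P_{R_x}$, and by the definitions $(\dagger)$, $(\dagger\dagger)$ lifted to the labelled setting we have $R_{T_R}=\bigcup_{x\in X}\{x\}\times R_{T_{R_x}}$ and $R^{P_R}=\bigcup_{x\in X}\{x\}\times R^{P_{R_x}}$. Moreover, by definition $\mathcal{A}$ is recoverable from $T_R$ (respectively from $P_R$) exactly when $R_x=R_{T_{R_x}}$ (respectively $R_x=R^{P_{R_x}}$) holds for every $x\in X$.

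Next I would apply Corollary \ref{fcorreldreprest} to each transition frame $(S,R_x)$ individually. The hypotheses of that corollary are that $\mathbf{M}$ is a non-trivial complete lattice and that $\mathbf B$ is a bounded subposet of $\mathbf{M}^S$ with $\{0,1\}^S\subseteq B$; these are exactly the standing hypotheses of the present corollary, and they do not depend on $x$. Hence for every $x\in X$ we obtain
\[
R_x \;=\; R^{P_{R_x}} \;=\; R_{T_{R_x}}.
\]

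Finally I would reassemble: taking the union $\bigcup_{x\in X}\{x\}\times(\cdot)$ of the three equal relations above and comparing with the descriptions of $R$, $R^{P_R}$, and $R_{T_R}$ recalled in the first step gives $R=R^{P_R}=R_{T_R}$, i.e.\ $\mathcal{A}=\mathcal{A}^{P_R}=\mathcal{A}_{T_R}$. There is no real obstacle here; the only point that requires care is the purely bookkeeping verification that the labelled induced relation $R_{T_R}$ (as introduced after Theorem \ref{labxreldreprest}) truly decomposes as the disjoint union of the slicewise induced relations $R_{T_{R_x}}$, so that the slicewise equalities supplied by Corollary \ref{fcorreldreprest} can be glued into the desired global identity.
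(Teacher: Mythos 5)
Your proposal is correct and matches the paper's intended argument: the paper states that this corollary "follows immediately from" Corollary \ref{fcorreldreprest}, precisely by the slicewise application to each frame $(S,R_x)$ and reassembly over $x\in X$ that you carry out. Nothing further is needed.
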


We can illustrate previous results in the following example.

\begin{example}\label{example2}\label{ex2} Consider the 
automaton ${\mathcal A}$, the set of propositions $B$  and the  state-transition 
relation  $R$ of Example \ref{firef}.  From Example \ref{expend2} 
we know the labelled upper transition functor 
$T_R=(T_{R_{x_1}}, T_{R_{x_2}})$ 
and the labelled lower transition functor 
$P_R=(P_{R_{x_1}}, P_{R_{x_2}})$ 
from $B$ to $(2^{S})^{X}$.
Since $B=2^{S}$ we have 
$T_{R_{x_1}}(B)\cup T_{R_{x_2}}(B) \subseteq B$ 
and $P_{R_{x_1}}(B)\cup P_{R_{x_2}}(B) \subseteq B$.

Now, we use $T_R$ for computing the 
transition relations $R_{T_{R_{x_1}}}$ and $R_{T_{R_{x_2}}}$
(by the formula $(\dagger)$ and Example \ref{expend4}) 
and  $P_R$ for computing the 
transition  relations $R^{P_{R_{x_1}}}$ and $R^{P_{R_{x_2}}}$ 
(by the formula $(\dagger\dagger)$  and Example \ref{expend4}). 
We obtain by Corollary \ref{fcorreldreprest} that $R_{T_{R_{x_1}}}=R^{P_{R_{x_1}}}=R_{x_1}$ and 
$R_{T_{R_{x_2}}}=R^{P_{R_{x_2}}}=R_{x_2}$. It follows that 
$R_{T_R}=R^{P_{R}}=\{x_1\}\times R_{T_{R_{x_1}}}\cup \{x_2\}\times R_{T_{R_{x_2}}}=R$,
 i.e., our given state-transition relation $R$ is simultaneously 
recoverable by the transition functors $T_R$ and $P_R$. 
Hence these functors are characteristics of the triple $(B,X,S)$. 
\end{example}

\section{Constructions of automata}

By a {\em synthesis} in Theory of Systems is usually meant 
the task to construct an automaton ${\mathcal A}$ which realizes 
a dynamic process at least partially known to the user. Hence, we are given 
a description of this dynamic process and we know the set  
$X$ of inputs. Our task is to set up the set $S$ of states 
and a relation $R$ on $S$  labelled by elements from 
$X$  such that the constructed automaton  $(X, S, R)$ 
induces the logic, i.e., the partially ordered set of propositions, 
which corresponds to the original description.

The algebraic tools collected in previous sections enable 
us to solve the mentioned task. In what follows we involve a 
construction of $S$ and $R$ provided our logic with the 
transition functor representing the dynamics of our system 
is given. As in the previous section, our logic ${\mathbf B}$ 
will be considered to be a bounded subposet $\mathbf B$ 
of a power ${\mathbf M}^S$  where ${\mathbf M}$ is 
a complete lattice of truth-values. Our logic ${\mathbf B}$ is equipped 
with a transition functor $T:B\to (M^{S})^{X}$ 
where $X$ is a set of possible inputs. We ask that 
either $T=T_{R}$ or $T=P^{R}$. 
Depending on the respective type 
of our considered logic and of the properties of $T$ we will present some partial solutions to this task.

\subsection{Automata via partially ordered sets}
\label{autopres}

Recall that (see e.g. \cite{Markowsky}), for any bounded   partially ordered  set $\mathbf{B}=({B};\leq, 0,1)$, we have 
 a full set $S_{\mathbf B}$ of morphisms  of bounded partially ordered set 
 into the two-element 
Boolean algebra considered as a bounded partially ordered set 
 ${\mathbf 2}=(\{0, 1\}; \leq, 0, 1))$. 
The elements $h_D: B\to \{0, 1\}$ of $S_{\mathbf B}$ (indexed by proper 
down-sets $D$ of  $\mathbf{B}$) are 
 morphisms of bounded partially ordered sets  defined by the 
prescription ${h_{D}}(a)=0$ iff $a\in D$. 

In other words, every bounded partially ordered  set  ${}{\mathbf B}$ can be 
embedded into a Boolean algebra ${\mathbf  2}^{S}$ for 
a certain set $S$ via the mapping $i_{{}{\mathbf B}}^{S}$.

Hence, it looks hopeful to use the bounded partially ordered  set 
${\mathbf 2}=(\{0, 1\}; \leq, 0, 1)$ 
for the construction of our state-transition relation 
$R_T\subseteq X\times S_{\mathbf B} \times S_{\mathbf B}$. 

As mentioned in the beginning of this section, 
we are interested in a construction of an automaton
${\mathcal A}=(X,S,R)$ for a given set $X$ of inputs and 
determined by a certain partially ordered set of propositions. 
We cannot assume that this set of propositions is necessarily 
a Boolean algebra. In the previous part we supposed that 
this logic ${\mathbf B}$ is a bounded partially ordered  set 
${\mathbf B}=(B,\leq,0,1)$. Now, we are going to solve the 
situation when it is only a subset $C$ of $B$.

\begin{theorem}\label{boolgaldreprest}
Let  $\mathbf{B}=({B};\leq, 0,1)$  be a bounded   partially ordered  set such that 
$\mathbf{B}$ is a bounded subposet of $2^{S_{\mathbf B}}$. Let 
$(C;\leq, 1)$ be a subposet of  $\mathbf{B}$ containing $1$, 
and $X$ a non-empty set.  
Let $T=(T_x)_{x\in X}$ where $T_{x}\colon{}C\to 2^{S_{\mathbf B}}$ 
are morphisms of   partially ordered sets such that  
$T_x(1)=1$ for all $x\in X$.  Let $R_T$ be the upper $T$-induced state-transition relation and 
$T_{R_T}\colon{}B\to (2^{S_{\mathbf B}})^{X}$ be the labelled upper transition functor 
constructed by means of the upper T-induced automaton 
${\mathcal A}_T=(X, S_{\mathbf B},R_T)$. Then, for all $b\in C$, 
$$T(b)=T_{R_T}(b).$$
\end{theorem}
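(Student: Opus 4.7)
The plan is to prove the equality pointwise: for every $x \in X$, $b \in C$ and $s \in S_{\mathbf B}$ I want $T_x(b)(s) = T_{R_{T_x}}(b)(s)$, where by $(\star)$ the right-hand side is $\bigwedge \{b(t) : s \mathrel{R_{T_x}} t\}$ computed in $\mathbf 2$. The inequality $T_x(b)(s) \leq T_{R_{T_x}}(b)(s)$ is essentially a restatement of the definition $(\dagger)$ of $R_{T_x}$: any $t$ with $s \mathrel{R_{T_x}} t$ satisfies $T_x(b')(s) \leq b'(t)$ for every $b' \in C$, and putting $b' = b$ shows that $T_x(b)(s)$ is a lower bound of the meet (vacuously $1$ if no such $t$ exists). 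I would dispatch this direction first.

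The content is the converse $T_{R_{T_x}}(b)(s) \leq T_x(b)(s)$. Since truth values live in $\{0,1\}$, only the case $T_x(b)(s) = 0$ matters, and it suffices to exhibit one $t \in S_{\mathbf B}$ with $s \mathrel{R_{T_x}} t$ and $b(t) = 0$. My candidate is $t := h_{\downarrow b}$, the morphism attached to the principal down-set $\downarrow b = \{a \in B : a \leq b\}$. Three verifications are needed. First, $h_{\downarrow b} \in S_{\mathbf B}$: the set $\downarrow b$ is a down-set containing $0$ but not $1$, where exclusion of $1$ follows from $b \neq 1$, which in turn comes from $T_x(1) = 1 \neq 0 = T_x(b)(s)$. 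Second, $b(h_{\downarrow b}) = h_{\downarrow b}(b) = 0$ since $b \in \downarrow b$. Third, $s \mathrel{R_{T_x}} h_{\downarrow b}$: for any $b' \in C$, if $T_x(b')(s) = 0$ the required inequality $T_x(b')(s) \leq h_{\downarrow b}(b')$ is automatic, and if $T_x(b')(s) = 1$ then monotonicity of $T_x$ combined with $T_x(b)(s) = 0$ forces $b' \not\leq b$, hence $b' \notin \downarrow b$ and $h_{\downarrow b}(b') = 1$.

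The main obstacle is precisely assembling this witness. The argument hinges on three cooperating hypotheses: the richness of $S_{\mathbf B}$ (which provides $h_D$ for every proper down-set $D$ of $\mathbf B$ with $0 \in D$ and $1 \notin D$), order-preservation of each $T_x$, and the normalisation $T_x(1) = 1$. Without $T_x(1) = 1$ one could not exclude $b = 1$ and hence could not ensure $1 \notin \downarrow b$; without monotonicity the implication $b' \leq b \Rightarrow T_x(b')(s) \leq T_x(b)(s)$ underpinning verification (3) collapses; and without the wealth of $S_{\mathbf B}$ the candidate $h_{\downarrow b}$ need not qualify as an admissible state. Once these three are in place the witness $h_{\downarrow b}$ closes the case $T_x(b)(s) = 0$ and the rest is bookkeeping.
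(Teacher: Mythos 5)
Your proof is correct and follows essentially the same strategy as the paper's: in the nontrivial case $T_x(b)(s)=0$ one exhibits a state $h_D\in S_{\mathbf B}$ with $s \mathrel{R_{T_x}} h_D$ and $h_D(b)=0$, contradicting $\bigwedge\{b(t)\mid s \mathrel{R_{T_x}} t\}=1$. Your explicit witness $h_{\mathord{\downarrow}b}$ is exactly what the paper's construction yields anyway (the maximal proper upper set avoiding $b$ is $B\setminus\mathord{\downarrow}b$), so you merely bypass the auxiliary set $V_x$ and the appeal to maximality, checking $s \mathrel{R_{T_x}} h_{\mathord{\downarrow}b}$ directly from monotonicity of $T_x$ and $T_x(1)=1$.
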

\begin{proof} Clearly, $T_{R_T}=((T_{R_T})_{x})_{x\in X}$ 
where $(T_{R_T})_{x}:B\to 2^{S_{\mathbf B}}$ 
are morphisms of   partially ordered  sets  for all $x\in X$. 
We write $R_{T}=\bigcup_{x\in X}\{ x\}\times R_{T_x}$ where   
$R_{T_x}$, $x\in X$ are the upper 
$T_x$-induced relation by $\mathbf{2}$.

Let us choose $b\in C$ and $x\in X$ arbitrarily, but fixed. We have to check that 
$T_x(b)=(T_{R_T})_{x}(b)$. Assume that  
$s\in S_{\mathbf B}$. It is enough to verify that $T_x(b)(s)= \bigwedge\{b(t)\mid  s R_{T_x} t\}$.

Evidently, for all $t\in S_{\mathbf B}$ such that  
$s R_{T_x} t$, $T_x(b)(s) \leq b(t)$. 
Hence $T_x(b)(s)\leq \bigwedge\{b(t)\mid  s R_{T_x} t\}$. 
To get the other 
inequality assume that $T_x(b)(s)< \bigwedge\{b(t)\mid  s R_{T_x} t\}$. 
Then $T_x(b)(s)=0$ 
and $\bigwedge\{b(t)\mid  s R_{T_x} t\}=1$. 
Put $V_{x}=\{z\in B\mid (\exists y\in C)(T_x(y)(s)=1\ \text{and}\ y\leq z)\}$.  
It follows that 
$b\notin V_x$ and $V_x$ is an upper set of ${\mathbf B}$ such 
that $1\in V_x$ (since $T_x(1)(s)=1(s)=1$). 
Let $W_x$ be a maximal proper upper set of ${\mathbf B}$ 
including $V_x$ such that $b\notin W_x$. 
Put $U_x=B\setminus W_x$. Then $U_x$ is 
a proper down-set, $0\in U_x$, ${h_{U_x}}(b)=0$ 
and ${h_{U_x}}(z)=1$ for all 
$z\in V_x$, i.e., 
${h_{U_x}}\in S_{\mathbf B}$ such that 
$T_x(a)(s)\leq a({h_{U_x}})$ for all $a\in C$. 
But this yields that $s R_{T_x} h_{U_x}$, i.e., 
$1=\bigwedge\{b(t)\mid  s R_{T_x} t\}\leq b({h_{U_x}})={h_{U_x}}(b)=0$, 
a contradiction. \qed 
\end{proof}

Using the relation $R^P$ instead of $R_T$, 
we can obtain a statement dual to Theorem  \ref{boolgaldreprest}.

\subsection{Automata via Boolean algebras}
\label{autoboolpres}
As for bounded partially ordered sets we have that, for any   Boolean algebra 
${\mathbf B}=(B;\vee, \wedge, {}{'}, 0,$ $1)$, there is  
 a full set $S_{\mathbf B}^{\text{bool}}$ of morphisms of  Boolean algebras 
into the two-element 
Boolean algebra $\mathbf{2}=(\{0, 1\};\vee, \wedge, {}{'}, 0, 1)$.

In what follows, we will modify our Theorem \ref{boolgaldreprest} 
for the more special case when the considered 
subposet ${\mathbf C}$ is closed under finite infima.

We are now ready to show under which conditions our transition functor can be recovered.

\begin{theorem}\label{fullbooldreprest} 
Let  $\mathbf{B}=({B};\vee, \wedge, {}{'}, 0,1)$  be a Boolean algebra such 
that $\mathbf{B}$ is a sub-Boolean algebra of ${\mathbf  2}^{S_{\mathbf B}^{\text{bool}}}$. 
Let ${\mathbf C}=(C;\leq, 1)$ be a subposet of  $\mathbf{B}$ containing $1$ such that 
$x, y\in C$ implies $x\wedge y\in C$,
and $X$ a non-empty set.  
Let $T=(T_x)_{x\in X}$ where $T_{x}:C\to 2^{S_{\mathbf B}^{\text{bool}}}$ 
are mappings preserving finite meets such that  
$T_x(1)=1$ for all $x\in X$. Let $R_T$ be the upper $T$-induced state-transition relation and 
$T_{R_T}\colon{}B\to (2^{S_{\mathbf B}})^{X}$ be the labelled upper transition functor 
constructed by means of the upper T-induced automaton 
${\mathcal A}_T=(X, S_{\mathbf B}^{\text{bool}},R_T)$. Then, for all $b\in C$, 
$$T(b)=T_{R_T}(b).$$
\end{theorem}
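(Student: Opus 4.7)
The plan is to mimic closely the proof of Theorem \ref{boolgaldreprest}, substituting the bounded-poset separation argument (down-sets of $\mathbf B$) by a Boolean-algebra separation argument (ultrafilters of $\mathbf B$). Writing $S := S_{\mathbf B}^{\text{bool}}$, I fix $b\in C$, $x\in X$, $s\in S$ and aim to show $T_x(b)(s)=\bigwedge\{b(t)\mid s R_{T_x} t\}$. The inequality $T_x(b)(s)\leq\bigwedge\{b(t)\mid s R_{T_x} t\}$ is immediate from the definition of $R_{T_x}$, so the only work is the reverse. Assume for contradiction that $T_x(b)(s)=0$ while $\bigwedge\{b(t)\mid s R_{T_x} t\}=1$. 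The goal is to produce some $h\in S$ with $s R_{T_x} h$ and $h(b)=0$, contradicting the assumption.

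Following the template of the earlier proof, set $V_x=\{z\in B\mid(\exists y\in C)(T_x(y)(s)=1\text{ and }y\leq z)\}$. This time, I exploit the extra hypotheses: because $C$ is closed under binary meets and each $T_x$ preserves finite meets, if $z_1,z_2\in V_x$ with witnesses $y_1,y_2\in C$, then $y_1\wedge y_2\in C$ and $T_x(y_1\wedge y_2)(s)=T_x(y_1)(s)\wedge T_x(y_2)(s)=1$, so $z_1\wedge z_2\in V_x$. Together with $1\in V_x$ (because $T_x(1)=1$) and upward-closure by construction, $V_x$ is therefore a filter of the Boolean algebra $\mathbf B$. Moreover $b\notin V_x$: otherwise some $y\in C$ with $y\leq b$ would satisfy $T_x(y)(s)=1$, and monotonicity of $T_x$ (a consequence of meet-preservation) would force $T_x(b)(s)=1$, contrary to assumption.

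The crucial step is to extend $V_x$ to an ultrafilter $U\subseteq B$ avoiding $b$. Since $b\notin V_x$, no $v\in V_x$ satisfies $v\leq b$; in a Boolean algebra this means $v\wedge b'\neq 0$ for every $v\in V_x$, so the filter generated by $V_x\cup\{b'\}$ is proper. By the Boolean prime ideal theorem it extends to an ultrafilter $U$ with $V_x\subseteq U$ and $b'\in U$, equivalently $b\notin U$. Let $h_U\in S$ be the Boolean morphism associated to $U$ (so $h_U(z)=1$ iff $z\in U$). Then $h_U(b)=0$, and for any $a\in C$, either $T_x(a)(s)=0$ (in which case $T_x(a)(s)\leq h_U(a)$ trivially) or $T_x(a)(s)=1$, in which case $a\in V_x\subseteq U$, giving $h_U(a)=1$. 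Hence $s R_{T_x} h_U$, which forces $1=\bigwedge\{b(t)\mid s R_{T_x} t\}\leq b(h_U)=h_U(b)=0$, the desired contradiction.

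The main obstacle is the separation step in the third paragraph: unlike in Theorem \ref{boolgaldreprest}, where any maximal proper upper set not containing $b$ sufficed, here the separating element of $S_{\mathbf B}^{\text{bool}}$ must respect the full Boolean structure. This is exactly why the hypotheses that $C$ be meet-closed and that the $T_x$ preserve finite meets are required: without them $V_x$ would merely be an upper set, and one could not invoke the Boolean prime ideal theorem to pass from $V_x$ to an ultrafilter, which is the only way to obtain a morphism lying in $S_{\mathbf B}^{\text{bool}}$ rather than in the larger set $S_{\mathbf B}$.
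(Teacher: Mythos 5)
Your proof is correct and follows essentially the same route as the paper's: the same filter $V_x$, the same use of meet-closure of $C$ and meet-preservation of $T_x$ to show $V_x$ is a filter not containing $b$, and the same ultrafilter separation yielding an element of $S_{\mathbf B}^{\text{bool}}$ related to $s$ that kills $b$. The only cosmetic difference is that you build the ultrafilter by adjoining $b'$ to $V_x$ and invoking the Boolean prime ideal theorem, whereas the paper takes a maximal proper filter extending $V_x$ and avoiding $b$ and observes it is an ultrafilter; the two devices produce the same separating morphism.
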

\begin{proof} Let us choose $b\in C$ and $x\in X$ arbitrarily, but fixed. Assume that  
$s\in S_{\mathbf B}^{\text{bool}}$. 
As in Theorem \ref{boolgaldreprest} it is enough to verify that 
$T_x(b)(s)= \bigwedge\{b(t)\mid  s R_{T_x} t\}$. 

By the same considerations as in the proof of  Theorem \ref{boolgaldreprest}  
we have $T_x(b)(s)\leq \bigwedge\{b(t)\mid  s R_{T_x} t\}$. 
To get the other 
inequality assume that $T_x(b)(s)< \bigwedge\{b(t)\mid  s R_{T_x} t\}$. 
Then $T_x(b)(s)=0$ 
and $\bigwedge\{b(t)\mid  s R_{T_x} t\}=1$. 
Put $V_{x}=\{z\in B\mid (\exists y\in C)(T_x(y)(s)=1\ \text{and}\ y\leq z)\}$.  
It follows that 
$b\notin V_x$ and $V_x$ is a filter  of ${\mathbf B}$ such 
that $1\in V_x$ (since $y, z\in V_x\cap C$ implies 
$T_x(y\wedge z)(s)=(T_x(y)\wedge T_x(z))(s)=T_x(y)(s)\wedge T_x(z)(s)=1\wedge 1=1$  
and $T_x(1)(s)=1(s)=1$). 
Let $W_x$ be a maximal proper filter of ${\mathbf B}$ 
including $V_x$ such that $b\notin W_x$. 
Then $W_x$ is  an 
ultrafilter of  ${\mathbf B}$.  
The ultrafilter $W_x$ determines a map $g_{W_x}\in S_{{\mathbf B}}^{\text{bool}}$ 
such that ${g_{W_x}}(b)=0$ and ${g_{W_x}}(z)=1$ for all 
$z\in V_x$, i.e., ${g_{W_x}}\in S_{{\mathbf B}}^{\text{bool}}$ is 
such that $T_x(a)(s)\leq {g_{W_x}}(a)=a({g_{W_x}})$ for all $a\in C$. 
This yields that $s R_{T_x} g_{W_x}$, i.e., 
$1=\bigwedge\{b(t)\mid  s R_{T_x} t\}\leq b({g_{W_x}})={g_{W_x}}(b)=0$, a contradiction. \qed
\end{proof}

The example below shows an application of Theorem  \ref{fullbooldreprest}.  

\begin{example}\label{apthbool} Consider again the set $S=\{s_1, s_2, s_3\}$  of states, the 
set $X=\{x_1, x_2\}$, and the set of propositions $B=2^{S}$ 
of Example \ref{firef}. Recall that in this case $S=S_{\mathbf B}^{\text{bool}}$.

Assume that $C=\{0, r, p', q', 1\}\subseteq B$ from the logic  ${\mathbf B}$ of  
Example \ref{expend1}.

Assume further that our partially known transition operator $T$ from $C$ to $(2^{S})^{X}$ is given as follows: 

\medskip
\begin{center}
\begin{tabular}{c c}
\begin{tabular}{l l}
$T_{{x_1}}(0)=0$,&$T_{{x_1}}(1)=1$,\\
 $T_{{x_1}}(r)=r$,& $T_{{x_1}}(p')=q'$,\\
& $T_{{x_1}}(q')=p'$,\\
\end{tabular}&
\begin{tabular}{l l}
$T_{{x_2}}(0)=p$,&$T_{{x_2}}(1)=1$,\\
 $T_{{x_2}}(r)=1$,& $T_{{x_2}}(p')=1$,\\
& $T_{{x_2}}(q')=1$.\\
\end{tabular}\\
\end{tabular}
\end{center}

Note that $T$ was chosen as a restriction of the operator $T_R$ from Example \ref{expend2} on the set $C$. 

Then, by an easy computation, we obtain from 
($\dagger$) that $R_{T}=\{x_1\}\times R_{T_{x_1}}\cup \{x_2\}\times R_{T_{x_2}}$ where 
$$R_{T_{x_1}}=\{(s_1, s_2), (s_2,  s_1), (s_3,s_3)\}\ \text{and}\ 
R_{T_{x_2}}=\{(s_2, s_3),   (s_3, s_3)\}. 
$$

From Theorem \ref{fullbooldreprest} we have that $T$ is a restriction of the operator $T_{R_T}$ on the set $C$. 

Moreover, we can see that our state-transition relation $R$ from Example \ref{firef} coincides with the induced 
 state-transition relation $R_T$, i.e., our partially known transition operator $T$ gives us a full information 
 about the automaton ${\mathcal A}$ from Example \ref{firef}.
\end{example}

\section{Conclusion}

We have shown in our paper that to every automaton 
considered as an acceptor a certain dynamic logic can be assigned. 
The dynamic nature of an automaton is expressed via its transition 
relation labelled by inputs. The logic consists of propositions on the given 
automaton and its dynamic nature is expressed by means of the 
so-called transition functor. However, this logic enables us to derive 
again a certain relation on the set of states which is labelled by 
inputs. The main task is whether the relation derived from the logic 
and the transition functor is faithful, i.e., whether it coincides with 
the original transition relation of the automaton.

 In fact, we have shown that if our set of propositions is large enough 
 this recovering of the transition relation is possible. Several examples are included.

Conversely, having a set $B$ of propositions that describe behaviour 
of our intended automaton and the transition functor which 
express the dynamicity of this process together with the set $X$ of inputs 
(going from environment), we presented a construction of 
a set of states $S$ and of a state-transition relation $R$ on 
$S$ such the constructed automaton $(X,S,R)$ realizes the description given 
by the propositions. It is shown that for every large enough set 
of states the induced transition functor coincides 
with the original one.

We believe that this theory enables us to consider automata 
from a different point of view which is more close to 
logical treatment and which enables us to make estimations and 
forecasts of the behaviour of automaton particularly
in a nondeterministic mode. The next task will be to 
testify which type of automaton is determined by a suitable 
sort of logic.

\section*{Acknowledgement}

This is a pre-print of an article published in International Journal of Theoretical Physics. 
The final authenticated version of the article is available online at: 
\newline 
https://link.springer.com/article/10.1007/s10773-017-3311-0.

\end{document}